\documentclass[11pt,a4paper]{article}

\usepackage{graphicx,amsmath,amssymb,amsthm,mathabx,booktabs}
\usepackage{longtable}
\date{}

\author{Yong Tan\\~\\
\emph{entermarket@163.com}}
\title{Generic and Efficient Solution Solves the Shortest Paths Problem in Square Runtime}

\theoremstyle{plain}
\newtheorem{theorem}{Theorem}
\newtheorem{lemma}{Lemma}

\begin{document}
\maketitle
\begin{abstract}
We study a group of new methods to solve an open problem that is the shortest paths problem on a given fix-weighted instance. It is the real significance at a considerable altitude to reach our aim to meet these qualities of generic, efficiency, precision which we generally require to a methodology. Besides our proof to guarantee our measures might work normally, we pay more interest to root out the vital theory about calculation and logic in favor of our extension to range over a wide field about decision, operator, economy, management, robot, AI and etc.
\end{abstract}
~\newline
\textbf{\small{Keywords:}}~\emph{optimal paths problem; shortest paths problem.}

\section{Introduction}
These days, there prevails the \emph{self-driving} car all over the world. Certainly there are more few people make voice to kick up the dust consciously or inconspicuously. But when we issue a lesson to the car that demand her to drive us to catch something as soon as possible, as the machine\textquoteright s father, you, how to device the method to let your robot make herself way through the hustle and bustle ambiance to destination. Herein we may suppose there is such a system to entail a work in favor of gathering the traffic data out of every real freeways, roads and streets; even to there are lots of facility to pick up fresh traffic data cover every corner in the city. Well, how does the machine plot her optimal course in a moment?

By appearance, this plain story frankly contains some meaning, speaking in natural language: though the hours taken on roads can be less, but this case does be not always caused by the shortest distance of course between two physical points. Between the hours and miles, we sometimes cannot tie up them intimately, so maybe have you to pay more money for extra miles than it does in a normal bank time.

This vivid case actually refers to a practical and omnipresent service supplied by taxi in the reality world. We can transform the subject to how to deal with any situation in a complicated surrounding to plan a better pilot. What we are going to do will be at substantial level to avert the subsequence of current tide of self-driving forward into a series of  problems, even to be \emph{unsolvable}: more and more cars go congest in some \emph{shortest} lanes that come to a hot zone, but else enrich much with sparse. Accordingly the traffic should run into a status of paralysis respecting the cars crowding in a local area, which case maybe is out of a \emph{logical shortage} of solution.

In \emph{CS} field, the above example presents the system on vehicle to be short at the capacity of analysis over multiple variables from descriptions of reality ambiance. It thus is unable to answer the questions that if is there existing such an optimal course and what relief is about the path? Accordingly it is out of question to turn out any code to maneuver an efficient computing. Of course, this case requires our solutions may utilize the resources of urbane sufficiently and set up a policy fair to everybody for balance the allocation about resource. Furthermore, the future method likewise needs to beyond the concrete shape of entity into a space of topology that renders a stronger flexibility to tackle various cases including rare probability event, so that the optimal lane always can be exactly found out if and only if the target is reachable. Otherwise, this function is even more regarded by the people as to break through \emph{bottle-neck}, where they are in a siege of bad situation.

Generalizes over remarks above, especially in the coming age of \emph{intelligence} we should not lack of the computing hardware. We need to lift our algorithms up to new altitude to lead the coming tide of AI. 

~\newline
\textbf{Related Work.} For talking about the theme of \emph{shortest paths problem}, we inevitably go refer to the notable method or servant on this research field, \emph{Dijkstra\textquoteright s Algorithm}\cite{3}, before our journey of survey. Though it needs must work over a strictly \emph{directed} edge-weighted graph, but it gives us some inspirations: it could grasp the overall resources to serve computing; except source, for every node on instance, it always can fetch the numbers from its neighbors at \emph{upstream} for select an optimal value amid them. Every node can be brought to join in the process with several contributions, it thus allows the subsequence result to turn out \emph{nth} output in the pattern of \emph{one-to-many} through whole data for once. Furthermore, such iterative operation makes the complexity of running time of algorithm into with a \emph{logarithmic multiplicative} overhead\cite{3}, which go a rather quick than others. Meanwhile the one-to-many feature can induce a sound practical sense, which it may be used to tackle the case for rivers of demands burst out at the same point. Otherwise, there is another merit that it should be able to \emph{orient the data structure} not concerning the \emph{shape} of graph, although this function need support by \emph{manual} acting. 

Certainly, it is not readily to maneuver a Dijkstra\textquoteright s algorithm over an \emph{arbitrary} given graph, especial to input a \emph{big data} of graph. You must needs to set the strike of sailing by handwork in favor of the directed concept. This is apparently to narrow the range of application of this method. On another front line, someone attempted to adopt the \emph{algebraic} approach to solve this problem. The margin is obvious for any input such that the description of instance would be squeezed into a $n^2$ \emph{matrix}, commonly which will be computed into a \emph{product}; maybe the figure goes far away the \emph{linear algebraic}\cite{3}, for example to do a sum over \emph{Boolean} calculation.

All seems to be very advanced and reasonable at the first appearance. But the computing complexity about it will come into a big awkward mass. Such kind of data structure of a matrix inevitably at least goes with exponent into \emph{square} irrelative to the instance in \emph{weakly} or \emph{strongly} connected, so that the complexity would at least be up to square even if to print the matrix, which for a weak connected graph it also can be described as a \emph{sparse} matrix. Thus the situation comes to be careful to maneuver the data structures. 

Since this class of problem needs heuristic measure to solve, therefore screening the whole data is necessary without suspicion. Then human have to pay their enthusiasms to ease the computing process between matrix and vector. It is reportedly that the complexity could be reduced to below \emph{cubic} and reach to the current best known level of $n^{2.376}$ by Coppersmith and Winograd in 1990\cite{3}. But meanwhile those approaches are not a generic servant in any occasion, which is consciously to request input in \emph{directed}, or else the evaluation is unreliable. It is evidence for this constraint to mean a big discounter to performance with handwork on practice; obviously a graph in a big way should heat down the clerks by a heavy overburden for exactly draw out nth directed diagrams for nth possible sources. 

Summarily speaking, there are always some fateful shortages in those above approaches to set the applications in a particular sphere. Although our method still is in the combinatorial framework, but through our efforts, we had succeeded to conduct the method entirely can be of data structure oriented to rid off the restriction of the shape or type out of instance. In addition, we also solved some other problems by the way; at the aspect of computed complexity, we further pull better down the runtime complexity approaching to and reach \emph{linear level}, special on \emph{memory} space no over \emph{input}, so that our trials in a usual laptop can completely cope with the input up to the \emph{mega-plus} at \emph{second} level in practice, which has been in a big way to exceed far more than known records before. \\
~\newline
\textbf{Organization.} We begin with a story to introduce our purpose in the head section of this paper. In the next section, we introduce the basic or necessary knowledge. And then after in $3^{\text{rd}}$ section we will review and explain the prep work, algorithm and pseudo code. In the $4^{\text{th}}$ section, we will introduce the further work and issue our questions round the theoretical propositions. Consequently, we give the relevant proofs and discussions in $5^{\text{th}}$ section. Accordingly, we have the new theorems to go on with our further optimizing solutions and complexity discussion. In last section, we sum up the all former works to make some conclusions; instantiate the application and future work for our extension of theory.  

\section{Preliminaries and Interpretations}
When we consider a data structure to describe a graph, we must at first count it is a collection of pairs of \emph{Binary} relationship among $n$ nodes, which is formulized as $G=(V,\tau)$ for an instance or \emph{network}. Follows our format, we use these letters $V,\tau$ refer to the collections of \emph{nodes} and the \emph{arcs} on instance respectively; moreover allow $n$ permanently denote the cardinality of $V$, having $n=\vert V\vert$, unless stated otherwise. With different to traditional format, we let $m$ refer to the number of leaves of a unit $s$ that is a cutting-graph out of instance, and presents a logical format of \emph{star tree} that the unique $root$ reflects $arcs$ onto $leaves$ which is the set partition on $\tau$\cite{1}. We formulize the relation among root and leaf as a \emph{Cartesian product} as:
\[s_{i}=R(i)\times L(i)\quad\text{subject to }s_{i}\subseteq\tau;~\text{for}~i\in V~\text{and}~m_{i}=\vert L(i)\vert .\]
Follows the relation above, we let letter $E$ refer to the $amount$ of arcs as $E=\sum_{i=1}^{n}m_{i}$, which can be written in concise or approaching as $E=mn$ for $m=\text{Max}(m_{i})$ to represent a scale of input of network for our survey.

\subsection{Weight}
When every arc or node on instance associates with a \emph{number}, we further call the form as \emph{weighted} network presented with $G=(V,\tau,W)$, which the capital letter $W$ refers to a numerical set. Furthermore, we use the term $W_{s,t}$ to denote a weight at arc $(s,t)$. What pattern we are studying in this paper is reserved to the weight as \emph{fixed} that whenever a number is bonded with a corresponding arc initially, the value should be immutable throughout whole computing process; moreover, all of the weights are surely subject to \emph{nonnegative} in default case. 

\subsection{Path and Cost}
For a path $p$ on network, we can define it as a queue of unequal arcs as $p=\tau_{1},\tau_{2},\ldots\tau_{k}$. For two successive pairs of node in the queue, there are same one with both the \emph{second} node in front pair and the head in successor. Naturally the number of arcs can be referred to \emph{length}. We call the sum of those weights over a path\textquoteright s all arcs as \emph{total} weight or \emph{cost}. \\

It is notably distinct between our measure and others; Virginia, Vassilevska\cite{4} deemed that \textquotedblleft\emph{A notable example of this phenomenon is the all pairs shortest paths problem in a directed graph with real edge weights\textquotedblright} (2008, p.1). We do not constrain a graph must into \emph{directed} or limit our solution that has to be subjected by certain type of network. Hence, to readily and succinctly express a course over the relation amid variables, we will use the symbols $\prec$ and $\succ$ to present the two sets of \emph{relationship} which oneself does not represent any function of computing.

\subsection{Hybrid Dijkstra\textquoteright s Algorithm.} In effect, we added a vital step ahead of Dijkstra\textquoteright s Algorithm as preparation. That is the \emph{graph partition}. This measure forcibly settles those elements of collection $V$ into a $queue$ of components $R$, in which we call the component as \emph{region}. The output $R$ from measure is the \emph{isomorphism} to original network, and the nodes in region should be ruled by the aspect of arcs amid them, and form the new \emph{shape} with a particular quality about virtual flow, which we can intuitively picture as there is an area around a few of rivers, and the geo-potential to make the stream from heads to tails. A journey in such area can be felt as a drifter out of source orderly run through nodes towards end. The collection $R$ thus can be allowed to form a $hierarchy$ layout from left to right along the strike of potential as follows.
\[R= (r_{i})_{i=1}^{k\leq n}\text{:~}r_{i}\subseteq V~\text{and}~r_{i}\cap r_{j}=\varnothing~\text{for}~i\neq j .\]
We reserve the default case that all the sources are in the \emph{head} region, and the head region set at the \emph{left side} or \emph{upper rank} interchangeably, naturally the phases of \emph{right side}, \emph{low rank} is for the wrong side; and under default case, it allows the graph partition cover the given network entirely, unless otherwise statement or definition. Under the logical structure or layout, we can feel that the instance thus follows a \emph{geometry frame of reference} that every node in it has a tag, on where there is printing the cardinal of region. On certain notion, it is equivalently to set up a coordinate of \emph{one-dimension} described by graph partition $R$; for each region, we do not sort those inside nodes because of shortage of criteria. 

\section{Algorithm Introduction}
As well, when we use Dijkstra\textquoteright s Algorithm on instance, the system of reference would provide the guide for program the course of \emph{up-down-left-right} among those nodes as same as ancient handwork to drawing arrows. Hence we span over a block of which the network\textquoteright s type restricts algorithm, and make up it to a generic servant to all class of network. The pseudo code is in \textbf{table.1}.\\

\renewcommand\arraystretch{1.2}
\begin{longtable}{p{120mm}}
\caption{\small{\textbf{Hybird Dijkstra\textquoteright s Algorithm}}}\\
\toprule
\small{\textbf{Input:}}$~G=(V,\tau,W);~\text{array:}R,P,status,T,\omega$. \small{\textbf{Output:}}$~R,P,T,\omega.$\\
\small{\textbf{Assister Variables:}} source; reg = 1; counter = 1;\\
\midrule
\small{\emph{Initialize: }}$ \omega[i]=0, T[i]=0, status[i]=0 \text{~for~}1\leq i\leq n;$  \\
\quad\quad $ status[\text{\small{source}}] =\text{reg}; R[1]=\text{\text{\small{source}}}; P[\text{\small{source}}] = \text{reg};$\\
\small{\emph{Partition and Dijkstra\textquoteright s Algorithm Loop:}}\\
1.\small{\textbf{While}} $\text{counter}< n$. \\
2.\quad\small{\textbf{For}} each node $u\in R$ which is new joined member \small{\textbf{Do}}\\
3.\quad\quad\small{\textbf{For}} each leaf $v\in L(u)$ \small{\textbf{Do}}\\
4.\quad\quad\quad\small{\textbf{If}} $status[v]=0$ \\
5.\quad\quad\quad\small{\textbf{Than}}\quad$status[v]=1; R\leftarrow v;$ counter $++; P[v]=reg + 1;$\\
6.\quad\quad\quad\small{\textbf{If}} $P[v]<P[u]$ \small{\textbf{Than}} \emph{Comp}$(u,v)$;\\
7.\quad reg $++$; \small{\textbf{If}} counter = $n$ \small{\textbf{Than Break}};\\
\small{\textbf{function}} \emph{Comp}(\emph{\small{root}}, \emph{\small{leaf}})\\
\quad w = $\omega$[\emph{\small{leaf}}]+$W$[\emph{\small{leaf}}][\emph{\small{root}}]; \\
\quad \small{\textbf{If}} $T$[\emph{\small{root}}] = 0 \small{\textbf{or}} ($T$[\emph{\small{root}}] $>$ 0 \small{\textbf{and}} w $< \omega$[\text{\emph{root}}]) \\
\quad\small{\textbf{Than}}\quad$T[\text{\emph{root}}]=\text{\emph{leaf}};~ \omega[\text{\emph{root}}]= \text{w};$ \small{\textbf{return}} 1;\\
\bottomrule
\end{longtable}
\begin{flushleft}
\begin{enumerate} 
\item  \emph{\small{Array R is of one-dimension structure, in which an index point to a unique node; actually the index states and maintains the rank of hierarchy from left to right among those regions.} }
\item \emph{\small{The array P is used to record the reflection for node onto region\textquoteright s cardinal, so as to the program can locate the position of node over the regions.}}
\item \emph{\small{We use array $\omega$ to store the cost with the data structure of node onto a cost that is the iterative sum of weights on a path from source to this node.}}
\item \emph{\small{It is a key and particular data structure for array T; in our logical device, every node as an index is characterized to stuff map onto itself inversely to common sense about an array. So, in the room of array T, for every index of node itself is pointed by a neighbor. Here we also call the index \emph{root} and yet also to say the stuff is \emph{leaf}, then the root thus may be as unique as index in this array but not leaf which allows a leaf may appear for repeated times among stuffs.}}
\item \emph{\small{Finally, the array status is an assister fellow that the stuff is onto a binary pairs of \{0,1\} to sign the settlement status of partition for screen nodes. }}
\end{enumerate}
\end{flushleft}
~\newline
We have slightly integrated the Dijkstra\textquoteright s methodology into graph partition program, and used the \emph{hash} tables to optimize on it. The measure allows all nodes one-to-one map to a group of \emph{non-zero} natural number, so that program can exploit the \emph{queue} of consecutive storage structure and its exclusive mechanism over indices to save the runtime about inquiry status on each node. The new measure thus results to push down the runtime complexity to $O(mn)$. Therefore we use five arrays to record the middle status and results, which size is nth and equal to the amount of nodes; further some arrays will be output as eventual fruits too. 

~\newline
\textbf{Discussion.} The code we showed practically work in the \emph{undirected} type called by \emph{simple} graph, where every arc $(s,t)$ with its inverse $(t,s)$ simultaneously exist on network. Actually the collection $\tau$ contains two describable \emph{sub-systems} for both strikes over arc\textquoteright s map among root and leaf\cite{1}. In the above pseudo code, the graph partition is along the strike of unit $s$ which a root maps onto many leaves; but for Dijkstra\textquoteright s method, such that the strike changes to many leaves onto a root. On simple network, these two data structures can be coexisting in unit $s$; if changes to a directed network, what only want is adding a data structure just analogous to unit $s$ with inverse direction as the Cartesian product $\beta_{i}=L(i)\times R(i)$\cite{1}. 

In addition, the measure evident requires the source is single in the head region, all others then are naturally of the identity of \emph{target}, which forms a situation of one-to-many or indefinite target, and etc. 

By this way, we may allow \emph{multiple} nodes in the head region as sources. We put the theme aside and will speak it well in the later section. Under this hierarchical system of graph partition, we need not concern any arc inverse to the rank. Thus we may obtain the \emph{context} about the results of what the method turns out; we have concluded that \emph{for any reachable target on network, the cost on it is the optimal amid all the shortest paths from source to itself}\cite{4}. Herein this shortest path is defined its length be only constructed by the \emph{least number} of nodes between source and target.

It is believed that the above conclusion likes done of Dijkstra\textquoteright s algorithm. We thus named this method by \emph{Hybrid Dijkstra\textquoteright s Algorithm}, abbr. \textbf{HDA}. Likewise dose this method seem to be analogue to \emph{K-shortest-paths}\cite{6} and algebraic genre\cite{3}\cite{5} with a length $k$; but their $k$ value must be given by manual action before calculation of search or else we cannot make an adjacency matrix with negative number to present the inverse strike or no number $k$ to maneuver those search loops\cite{5}. Our measure omits such \emph{manual} work; by the \emph{isomorphism} of instance which made from graph partition, we constitute a frame of reference, and automatically obtain the $k$ value which is involving to cardinal of region. In fact, we have solved the \emph{k shortest paths problem} completely\cite{6}. 

In fact, if we put the conclusion into the shortest paths problem, our extension proof quickly forwards to else\textquoteright s case: \emph{as long as the \emph{triangle inequality} about weights exists at everywhere of instance, the costs out of HDA should be exactly optimal values}\cite{5}. Certainly, the reality world is impossible to conduct itself based on a proper rule what we device. Though the triangle inequality might be a popular case, but it is not sure in all case, such as the vivid example over detour at head paragraph. Accordingly, the HDA is merely of an approximated measure in theory. To the subsequent methods we will issue, the work will tread on this result to proceed.

\section{Optimizing Research}
In above method, we give a strict subject to the selection objects on every node except source, which the referential data of weight for computing is only from the leaves in \emph{upper} rank; otherwise to the source, we initialize it with a zero value. Consequently we use a new methodology of \emph{bio-evolution} measure on every node to optimize solution HDA: from the head to toes of array $R$, we allow each root successively in array to capture every neighbor; furthermore, once to find out a neighbor that meets some criterion into root, program would modify the information on root over \emph{stuff} and \emph{cost} in array $T$ and $\omega$ respectively. Eventually, all optimizing results should be reflected on arrays $T$ and $\omega$. The pseudo code lists as follows.

\renewcommand\arraystretch{1.1}
\begin{longtable}{p{120mm}}
\caption{\textbf{Optimize By Evolution Methodology}}\\
\toprule
\small{\textbf{Input:}}.$~G=(V,\tau,W);~\text{arrays}~R,T,\omega$; \small{\textbf{Output:}}$~T,\omega$.\\
\small{\textbf{Assister Variables:}} $\text{flag} = 0; \text{counter}=1;$\\
\midrule
01.\small{\textbf{While}} $\text{counter}=1$\\
02.\quad \small{\textbf{For}} $i\coloneq 1\rightarrow \vert R\vert$ and each $ u=R[i]$ \small{\textbf{Do}}\\
03.\quad\quad \small{\textbf{For}} $\text{each~}v\in L(u)$ \small{\textbf{Do}}\\
04.\quad\quad \quad\small{\textbf{If}} \emph{Comp}$(u,v)$ \small{\textbf{Than}} $\text{flag}++$;\\
05.\quad \small{\textbf{If}} $\text{flag} = 0$ \small{\textbf{Than Break}}; \small{\textbf{Else}} $\text{flag} = 0$; \\
\bottomrule
\end{longtable}
 This measure is very simple, we call it \emph{Evolution Optimizing Method}, abbr. EOM, but it immediately refers to several questions or arguments as follows: 
\begin{flushleft}
\begin{enumerate} 
\item\emph{Is there a condition to be able to make program halt definitely? The meaning is that if is there some devil-circuit or logical trap conduce to the variable \emph{flag} never equal to 0.} 
\item\emph{Although the operation stops finally at sometime followed to some criteria, what mechanism is there to ensure the result to be optimal; are their values \emph{exact} without doubt?}
\item\emph{Whether measure goes destruct the logical entity into pieces? The cause may take a pre path maybe to crash into some strings or apart from pre body, accordingly some nodes become unreachable to source.}
\item\emph{How the complexity of running time covers this measure?  May it deal with the big data?}
\end{enumerate}
\end{flushleft}
At first there is a knotty problem before us, the optimizing seemly conducts in scatter over instance. It likely seems to demand us to capture all glimmer spices against dark sea, where we mean the analysis hard to go after a string or physics law in the chaos system. For readily detail our methodology over those questions, our first work is to instantiate the abstracted and complicated relationships with some examples in next section. 

\section{Theory Proof}
We informally define a particular path that above all it is a \emph{Hamiltonian} path including all nodes of network. We allow every arc on it associated with \emph{zero weight}, such that the every cost at node on path is known into zero. We call this path \emph{Hamiltonian Zero Path}, abbr. HZP. Subsequently, we should exploit the concrete pattern to discuss various questions intuitively. For convenience, we use the grid network as our concrete instance that is tidy in our general sense. The diagram has been drawn on the right side in \textbf{Figure.1} as follows. 
\begin{center}
\includegraphics[width=120mm]{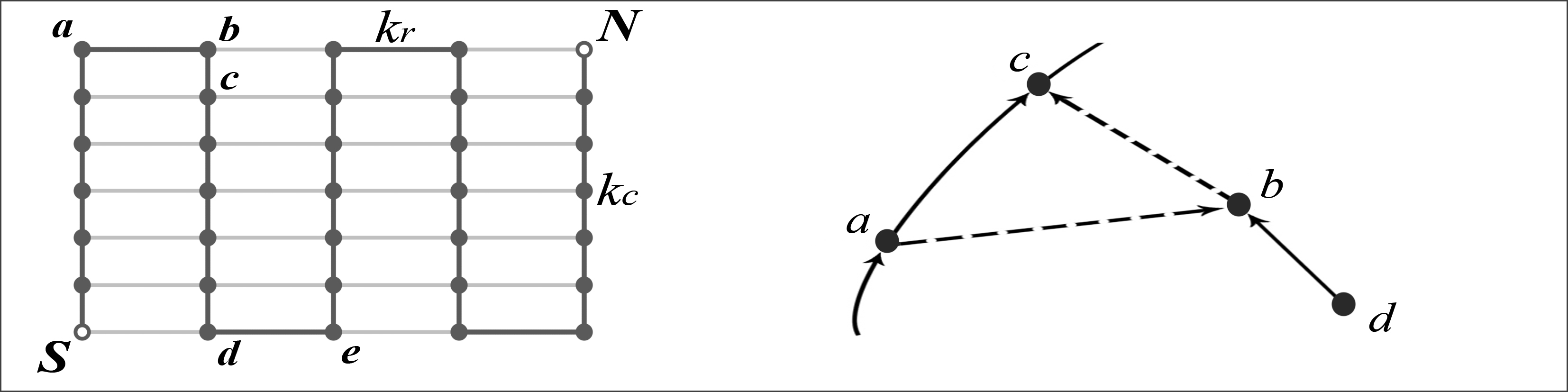}\\
\textbf{Figure.1}
\end{center}
\subsection{Exmaple}
To the left graph on \textbf{Figure.1}, we allow the variables $k_{r}$ and $k_{c}$  refer to the amount of nodes on \emph{column} and \emph{row} respectively, so that we have the term $n=k_{r}\cdot k_{c}$ like figure area of a rectangle. We set the $S$ spot seating at bottom-left corner as source and the head of HZP, and the $N$ spot thus naturally is the end tail, which is on the upper-right corner at the rightmost side. By the way, we must note that for the number of nodes on row side, if it is \emph{Even}, then the $N$ spot should be at the same bottom side into $S$ spot. Moreover we particularly point that the high light strings expose the initial frame of the grid graph. Without suspicious, those solid dark spheres are other nodes.

When we survey the nodes of $S,a,b,c,d$ and $e$ on left diagram, we have some counts over these nodes about hierarchy of \emph{graph partition} as follows.
\begin{flushleft}
\begin{enumerate} 
\item \emph{The node nearest to source S is d, and the e seats at the minor site with one stage lower than d.} 
\item \emph{The nodes a, c are in same one region, which rank in the hierarchic system is higher than b.}
\item \emph{Between the above two causes and the rule of HDA, out of the results yielded by HDA comes a conclusion: we ought to obtain a segment of the HZP only that is along on the leftmost side $(S,a)$ and arc $(a,b)$ at that network. Since those nodes is settled according to the rigid hierarchy ranked, we might thus understand the length is $k_{c}$ for this initial segment of HZP, moreover we say the shoot out of source at the \emph{regular way.}}
\end{enumerate}
\end{flushleft}
~\newline
As well the node $b$ also is in the range of grasp by node $c$, and regarding to our EOM measure, node $c$ can obtain the optimizing from its own leaf node $b$. The policy of EOM allows $c$ to reverse the strike to hierarchy obviously, so that we can call this conduct is sliding at the \emph{wrong way}.

This is an intuitive case that at the regular way the program processes a batch of data and let the visitor rush ahead and the path shoots a long string. But at the wrong way, the progress is just to make a step due to a long ergodic loop on array $R$. We formulize this process as follows.

\begin{equation}\label{1}
P_{r}=\sum_{i=1}^{c_{r}}p_{i}; P_w=\sum_{j=1}^{c_{w}}p_{j}~\text{subject to }p\subseteq V.
\end{equation}
~\newline
We let the $P_{r},P_{w}$ variables represent both strikes of hierarchy severally. The variable $p$ refers to the collection of nodes on every alternate course, further set $c$ variable figures the number of sub-search on several strikes. By the left diagram we then might have the equations as follows.

\begin{equation}\label{2}
P_{r}\cup P_{w}=n~\text{and}~\vert P_{r}\vert - \vert P_{w}\vert = \vert R\vert.
\end{equation}
~\newline
We further follow the above terms\eqref{2} to gain the number of the outmost loop in above pseudo code as $c_{r}  + P_{w}$. Of course we can have an inequality as $c_{r}  + P_{w}\leq  n-1$, such that we immediately obtain a complexity term $\lambda mn$ for $\lambda=c_{r}  + P_{w}$. We may thus refer to the term $O(mn^{2})$ for describe the complexity for grid instance tentatively. After looking through the above formulas, we plot to rise up the efficient for our solution, and the device naturally is to convert single step $P_{w}$ into \emph{batch} mode. That is smart measure to turn over the course of process that makes startup of operation from the \emph{tail} of array $R$. As the fore effect at the regular way, we can prospect that the range of $\lambda$ should be tuned to $1\leq\lambda\leq k_{r}<\vert R\vert$. 

For the above example, we may have $P_{r}=c_{r}(k_{c}+\alpha)$ for $\alpha=1,2$ and $P_{w}=c_{r}\cdot k_{c}$; when the number of nodes on row $k_{r}$ is \emph{Even}, there is $c_{r}=c_{w}$; on the wrong side for $k_{r}$ into \emph{Odd}, the fore term comes to $c_{r}=c_{w}+1$. Any way, we can write it finally into a concise approach \emph{mean} as follows.

\begin{equation}\label{3}
n=C\cdot k_{c}~\text{for}~C=2c_{r}=k_{r}.
\end{equation}
~\newline
Now we know the coefficient $\lambda$ in above term cannot beyond $n$ such that the complexity cannot beyond the $dual$ orders for grid network, meanwhile to \emph{complete} graph it is never over the \emph{cubic}. If we use the number of arcs as the scale of input entry, the complexity can then be less than and equal to $square$ following this notion. Strictly speaking, this term at present is merely suitable to the case of seeking HZP in a network. About the case for given an arbitrary instance, we will speak to this point later. 

The device about two courses is better but it similarly encounters the doubt by those above questions. Accordingly, here we add a question again that for optimizing from tail of array $T$, if does the factor of wrong course destroy the structure of entity in array $T$? 
\subsection{Proof}
Before these proofs, we introduce the array $T$ ahead of others. Her structure has been introduced by us and refers to a particularly logical meaning which is toughly defined to $T[root]\leftarrow leaf$ by us. We justly exploit the character of array even to saying a merit, which is only to allow a unique index exist in structure. This feature can further become a key ingredient to whole logical structure of our solution, because we will exploit the structure to maintain an essential framework of a \emph{tree}, which goes throughout the whole process of computing. But the array is only a storage for data no enough to ensure the logical system in unchangeable forever; we thus still need continuously to seek other measures to tackle it. 

As well, following the above discussions, we need focus on the \emph{calculation} system that may bring out the vital impacts to our solution; accordingly we need formulize the operation in EOM measure. We primary \textbf{define} a data structure of cost over \emph{root}, \emph{leaf} and \emph{arc} by a \emph{triple} numerical array as follows.
\begin{equation}\label{4}
\chi(r,l)=(W_{r},W_{l},\omega_{(l,r)})~\text{subject to}~W_{r}=W_{l}\cdot\omega_{(l,r)}.
\end{equation}
For two variables $W_{r},W_{l}$, we allow them refer to the costs on root and leaf respectively, which everyone is sum of weights on the path from source; variable $\omega_{(l,r)}$ is the weight at the arc $(l,r)$. We allow the symbol of dot $\cdot$ to present a calculation on array. Accordingly the costs about a leaf set about root $r$ can be formatted as follows. 
\[X_{r}=\{\chi(r,i)\}_{i=1}^{<n}\colon r,i\in s_{r}.\]
We suppose there is a function $\phi$ that can be able to select the optimal value and leaf for root. We define it as follows.

\begin{equation}\label{5}
\phi(X_{r})=\left\{
\begin{array}{ll}
\chi(r, i)&\text{iff }\chi_{i}\prec\chi_{j} \text{ for }\forall\chi(r,j)\in X_{r}\wedge i\neq j.\\
~&~\\
\text{undefine}&\text{Otherwise}.\\
\end{array}\right.
\end{equation}

We use the symbol $\prec$ to present a relation of \textquotedblleft \emph{worse} \textquotedblright or \textquotedblleft \emph{better} \textquotedblright, which for above formula\eqref{5} such that we can say $\chi_{j}$ is worse than $\chi_{i}$. This is a contest for those leaves to strive and become an optimal point to shoot to root with arc. Hence we use array $T$ and $\omega$ together to present a logical structure which a tree is in array $T$ over source to many targets and array $\omega$ records every iterative cost on each node as the form $\omega[i]\rightarrow W_{i}$. Consequently, we may go after this line to proof our propositions, and to seek the condition for program halt. We can class the problem into the score that there is a \emph{devil-circuit} or \emph{logical trap} in the array $T$ round our computing.

\begin{theorem}
Giving an instance $G=(V,\tau,W)$, consider there series of arrays $T,R$ and $\omega$ as results out of an operation of HAD. We suppose there is running virtual machine with EOM procedure, and those arrays with instance are put into machine as entry. If for those costs in array $\omega$ over a path such that between their relations match \emph{transitive} and \emph{irreflexive}, then the array $T$ will contain a structure of \emph{tree} throughout.
\end{theorem}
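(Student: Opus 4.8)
The plan is to read the conclusion ``$T$ contains a structure of tree'' in the only sense the data layout permits: because the slot $T[u]$ holds at most one leaf, $T$ is a \emph{parent-pointer} map on the reachable nodes, and it encodes a tree rooted at the source exactly when its functional graph is acyclic (an acyclic parent map that assigns a parent to each of $k$ reachable nodes except the source has $k-1$ edges on $k$ vertices, hence is a spanning tree). So the theorem reduces to the claim that no call to \emph{Comp} ever closes a directed cycle $u\to T[u]\to T[T[u]]\to\cdots\to u$. I would prove this by maintaining an invariant on the pair $(T,\omega)$ that is restored after \emph{every} call to \emph{Comp} (during both HDA and EOM), namely $(\mathrm I)$: for each reachable non-source node $u$ one has $\omega[u]\ge\omega[T[u]]$, so $\omega$ is non-increasing along every parent chain and such a chain can only terminate at the source, the unique slot carrying value $0$. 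Acyclicity is then immediate from $(\mathrm I)$ together with finiteness, and ``tree throughout'' becomes ``$(\mathrm I)$ is an invariant of the whole run''.

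For the base case I would invoke the code of HDA directly: there \emph{Comp}$(u,v)$ is entered only when $P[v]<P[u]$, so every edge HDA writes into $T$ points from a root to a leaf in a strictly higher region of the one-dimensional coordinate supplied by $R$; the functional graph is acyclic outright, and $\omega[u]=\omega[v]+W_{v,u}\ge\omega[v]$ by nonnegativity of weights, so $(\mathrm I)$ holds at the instant EOM starts. For the inductive step, take a call \emph{Comp}$(u,v)$ inside EOM that actually rewrites $T[u]\leftarrow v$ and $\omega[u]\leftarrow w$ with $w=\omega[v]+W_{v,u}$; by the guard in \emph{Comp} this occurs only if $w<\omega[u]_{\text{old}}$. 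Suppose for contradiction that $v$ is a descendant of $u$ in the current tree, i.e. the parent chain out of $v$ meets $u$. Applying $(\mathrm I)$ edge by edge along that chain gives $\omega[v]\ge\omega[u]_{\text{old}}$, hence $w=\omega[v]+W_{v,u}\ge\omega[v]\ge\omega[u]_{\text{old}}$, contradicting $w<\omega[u]_{\text{old}}$. So $v$ lies outside the subtree of $u$; detaching $u$ with its subtree from its former parent and re-hanging it under $v$ introduces no cycle, and since $v$ is still joined to the source the structure stays a tree. Finally $(\mathrm I)$ survives the rewrite: the new edge $u\to v$ satisfies $\omega[u]_{\text{new}}=w\ge\omega[v]$, and any former child $x$ with $T[x]=u$ now has $\omega[x]\ge\omega[u]_{\text{old}}>\omega[u]_{\text{new}}$, so monotonicity is only sharpened, while all other edges are untouched. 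Induction over the finite sequence of \emph{Comp} calls then yields the theorem.

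The delicate point I expect to fight is the boundary case $W_{v,u}=0$, where $(\mathrm I)$ degrades from strict decrease to mere non-increase across an edge, so that a priori a cycle of zero-weight arcs all carrying one common cost is not yet excluded by the counting/monotonicity line alone; this is exactly where the hypothesis must be spent. Reading ``the costs over a path match \emph{transitive} and \emph{irreflexive}'' as the statement that the comparison relation $\prec$ of equations \eqref{4}--\eqref{5}, restricted to the costs along any single path, is a strict order, a putative cycle $v_1\to\cdots\to v_k\to v_1$ would chain $\omega[v_1]\preceq\cdots\preceq\omega[v_k]\preceq\omega[v_1]$ into $\omega[v_1]\prec\omega[v_1]$, contradicting irreflexivity; so the hypothesis is precisely what kills the residual equal-cost loop. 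Two lesser nuisances remain to be handled cleanly rather than cleverly: bookkeeping for nodes the source cannot reach, whose $\omega$ stays $0$ and which should be treated as phantom isolated roots (or simply excluded by working inside the source's component), and checking that \emph{Comp} is genuinely atomic — old edge removed and new edge installed together — so that the ``current tree'' invoked in the inductive step is well defined at the moment of each rewrite.
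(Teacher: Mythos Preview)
Your proof is correct and reaches the same contradiction the paper does, but by a more operational route. The paper argues abstractly: it first observes (as you do) that the array-index structure of $T$ forbids any node having two incoming pointers, so the only danger is a head-to-tail cycle $x_1,x_2,\ldots,x_s,x_1$; then, without setting up an explicit invariant or inducting over the run, it simply writes down the chain $W_1\succ W_2\succ\cdots\succ W_s\succ W_1$, invokes transitivity to collapse it to $W_1\succ W_1$, and cites irreflexivity for the contradiction. Your version instead maintains the concrete numerical invariant $(\mathrm I)$ across every call to \emph{Comp} and derives the contradiction from the strict guard $w<\omega[u]_{\text{old}}$, with the transitive/irreflexive hypothesis brought in only to dispatch the residual equal-cost loop. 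What your decomposition buys is a genuine induction over the execution (so ``throughout'' in the statement acquires literal content) and a clear localisation of where the hypothesis is spent; what the paper's buys is brevity and independence from the particular arithmetic in $\omega$, since its argument applies to any strict-order $\prec$, not only numerical $<$ on sums of nonnegative weights. One small remark: under your concrete reading the strict guard already blocks the zero-weight cycle by itself, since $\omega[v]\ge\omega[u]_{\text{old}}$ together with $W_{v,u}=0$ gives $w=\omega[u]_{\text{old}}$, which fails $w<\omega[u]_{\text{old}}$; your appeal to the hypothesis there is sound but, in the numerical setting, redundant --- it becomes essential only when $\prec$ is read as the abstract relation of equations~\eqref{4}--\eqref{5} rather than ordinary $<$.
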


\begin{proof}
About all, we need identify what is a \emph{circuit}. There are two axiomatic cases for common sense to qualify a network into circuit. First one is to consider two strings $p_{1}=(x_{i})_{i=1}^{s}$ and $p_{2}=(y_{j})_{j=1}^{t}$, which they are in arbitrary length but at least greater than and equal to 1. Amid them is there $x_{1}= y_1$, $x_{s}= y_{t}$ and $x_{1}\neq x_{s}$; except front case, all nodes on two strings are not equal to each other. It is obviously impossible in respect to the structure of array $T$, in which the root just receives only an arc from leaf so that the data structure could not describe such circuit, if not, the end node $x_{s}$ or $y_{t}$ should have two different elements $x_{s-1}$ and $y_{t-1}$. Thus we employ the data structure and logical definition to reject this type of circuit.\\

For the second type, we may allow a string $c=(x_{1},x_{2},\ldots, x_{s},x_{1})$ for $i\neq j$ having $x_{i}\neq x_{j}$ to present the circuit with a structure of head adjoining to tail. For more explicit to present the logical structure, we rewrite it as $c=(x_{1},x_{2}),(x_{2},x_{3}), \ldots, (x_{s},x_{1})$ that is a queue of arcs\cite{1}. 

We can assume that EOM could be able to make circuit $c$ in array $T$, since all the second nodes in the arcs may be in entire difference; meanwhile the case will go with the common definition of array, so every index in array $T$ is individual to tag different roots, but not to be such on their several elements. Therefore the array $T$ may entirely load a circuit in. On the other hand, we let the array $\omega$ to store the cost on every node which means array record the sum of weights on the path towards node\textquoteright s oneself from source.  We then suppose that there is a measure of estimation to \emph{organize} a path following those costs; based on the above term\eqref{4}, we thus may have a selection function $\rho$ and symbol $\succ$ to render the relations over the several costs that associate to these nodes and arcs on string $c$. We can formulize the operation as follows. 
\[\rho(W(c))=\rho(X_{2}),\rho(X_{3}), \ldots, \rho(X_{1}).\]
We can further write the form as:
\[\rho(W(c))=(W_{1}\succ W_{2}),(W_{2}\succ W_{3}),\cdots, (W_s\succ W_1).\]
It is apparently that if the relation $\succ$ is of \emph{non-transitive}, then saying the assessing on each node is separate mutually; so it is likely for node $x_{1}$ to select node $x_s$ as the stuff in array $T$. Saving we set up else mechanism to charge and prevent this logical error, or else there is the probability to form a circuit in array $T$. On contrary, if the relation $\rho$ is of \emph{transitive}, then clearly there is a chain of map as follows.
\[\rho(W(c))=W_{1}\succ W_{2}\succ W_{3}\succ\cdots \succ W_s\succ W_1.\]
And finally have a term $W_1\succ W_1$. If relation $\rho$ follows the feature of \emph{reflexive} that $W_{1}$ can onto it oneself, the circuit holds. Contrast to the case of \emph{irreflexive}, the circuit does not exist; that is said the path and its own a group of discrete costs possesses the character of $monotone$ over hops to costs.

Hence in our pseudo code, we should use the relations $<$ and $>$ to filter candidates and reject relation of \emph{equal}. Except for rejection the circuit, we may prevent the conduct of selection from being \emph{locked} in hang, which node cannot select a neighbor when it suffers the case for couples of equal volumes from neighbors. For example of seeking HZP out in network, where are all the costs on it into zero; so as to we can know if the program once receive a node into path of middle result then program does not do repetitive work on same node, because the relation of estimation is \emph{irreflexive}. On another side, our proofs about this theorem will ensure program should halt when no node need to be mended. 

It is noticeable that the above condition is a \emph{sufficient} over our proposition, because some measure would possess the features on some period and present \emph{monotone strike}. For example the \emph{sine} function is monotone increasing for the range $[0,\pi /2]$, but not follow whole domain $[0,2\pi]$.

\end{proof}
~\newline
We accordingly know there are two measures to maintain a topology shape of tree throughout the process; any way to a finite instance, the path on network should be finite too which we define it as a collection of different arcs under some criteria. We then have an inequality to describe the range for the cost on a path $p$ as $0\leq W(p)\leq\alpha W_{\text{max}}$ for $\alpha=n-1$, where we have set the each weight into \emph{nonnegative}. Accordingly, the program should \emph{halt} at some step through a \emph{finite} number of stages eventually. 

Consider the cost changes at a target, we thus may suppose there is a queue of cost to represent this changeable process from the initial result to the end. By the theorem 1, the cost may be repetitively optimized for several times; for every optimizing, the current value should be better than the previous. Therefore we can identify the fact that those numbers should \emph{evolve} to \footnote{the value\textquoteright s change is uncertain into a small number, then we do not use the words \emph{convergence} and \emph{series} to describe this change over values} and stop at a certain value through consecutive correction or refining. So we can suggest the program can be halted by the cause that there is no node to mend; but we still cannot conclude to speak the cost on each node is optimal without any objective evidence. 

Before proof this proposition, we need to prove another proposition that the EOM method cannot cut any part off the instance. On certain notion, if the target is unreachable, then we may regard the cost tend to infinite. We can utilize the following lemma to solve this subject.

\begin{lemma}
There conditions are given same as in theorem 1. Consider the case for graph partition cover all nodes, such that at any step in the executing process of EOM program, every target should be reachable from source. 
\end{lemma}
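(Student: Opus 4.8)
The plan is to prove the lemma by induction on the elementary steps of the \textbf{EOM} procedure (one step being one call to $\emph{Comp}$ that returns $1$), carrying throughout the stronger invariant that the array $T$ encodes a \emph{spanning tree} rooted at the source; reachability of every target is then immediate from that invariant. For the base case, when a non-source node enters $R$ during \textbf{HDA} it is discovered through a strictly-upper-rank neighbour, and when that node is itself processed as a new member line~6 applies $\emph{Comp}$ to it with (at least) that discovering neighbour as a leaf, so by the end of \textbf{HDA} every node but the source has in $T$ a parent of strictly smaller rank; the parent map is therefore acyclic, and, the graph partition covering all of $V$ by hypothesis, it spans $V$ — exactly a spanning tree rooted at the source. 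The source keeps $\omega[\mathrm{source}]=0$, which, weights being nonnegative, is a strict minimum, so it is never re-parented and remains the root.

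For the inductive step, a $\emph{Comp}(u,v)$ call returning $1$ sets $T[u]:=v$ and $\omega[u]:=\omega[v]+W_{v,u}$; in the current tree this deletes the arc joining $u$ to its current parent and inserts the arc $(v,u)$, with $v$ a neighbour of $u$ (it lies in $L(u)$ and the network is simple). By the elementary exchange property of trees — remove one arc and the tree splits into two pieces; adding any arc between the two pieces yields a tree again — swapping one tree arc for one new arc gives a spanning tree on the same vertex set precisely when the new arc rejoins those two pieces, i.e. precisely when $v$ is \emph{not} a descendant of $u$ in the current tree. So the whole step reduces to excluding the case that $v$ is a descendant of $u$.

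That exclusion is the heart of the matter. First I would record that throughout \textbf{EOM} the value $\omega[x]$ is non-increasing for each fixed $x$ (every node already owns a parent when \textbf{EOM} begins, so an assignment to $\omega[x]$ passes only under the strict guard $w<\omega[x]$), and that every such assignment has the form $\omega[x]\leftarrow\omega[\,\text{parent of }x\,]+W_{\cdot,x}$ taken with the parent's $\omega$ at that instant. Chaining these two facts up the current tree gives that the $\omega$ of any node is at least the $\omega$ of its parent, so $\omega$ is non-decreasing along every path from the source. Hence, were $v$ a descendant of $u$, then $u$ would lie on that path and $\omega[v]\ge\omega[u]$, whence $w=\omega[v]+W_{v,u}\ge\omega[u]$ and the guard $w<\omega[u]$ fails — contradicting that the call returned $1$. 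Therefore $v$ is not a descendant of $u$, the exchange preserves the spanning tree, and since no step ever deletes a vertex the invariant persists; thus at every step of \textbf{EOM} every target is reachable from the source.

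I expect the main obstacle to be exactly this last point, because \textbf{EOM} does not keep $\omega$ globally exact: lowering $\omega[u]$ leaves the costs stored at $u$'s descendants stale and over-large. The argument goes through only because that staleness always errs on the side that strengthens the strict guard, and the clean way to cash this in is the per-node time-monotonicity of $\omega$ together with the fact that each root receives exactly one arc in $T$ — which are precisely the monotone/\emph{irreflexive} features of path-costs already invoked in Theorem~1 and realised by the strict comparisons inside $\emph{Comp}$.
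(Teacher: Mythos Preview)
Your argument is correct. The paper's own proof is much shorter and proceeds differently: it simply cites Theorem~1 for the absence of circuits in $T$, observes that every non-source root carries a non-empty leaf, and then rules out separation by a two-case enumeration (a detached path would have to either point into an already-separated piece---impossible since the partition covers $V$---or close up on itself---impossible by Theorem~1). In effect the paper treats acyclicity as already established and argues reachability from ``function with no cycles on a finite set terminates at the unique fixed-point-free element, the source.'' You instead fold Theorem~1 and the lemma into a single step-by-step induction maintaining the stronger spanning-tree invariant, and your exclusion of the bad case ($v$ a descendant of $u$) via the inequality chain $\omega[x]\ge\omega[\mathrm{parent}(x)]$---derived from per-node time-monotonicity of $\omega$ together with nonnegative weights---is a concrete, quantitative version of what the paper gestures at with ``transitive and irreflexive'' in Theorem~1. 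The paper's route is more modular; yours is self-contained and considerably more rigorous, and it makes explicit exactly where nonnegativity of the weights and the strict comparison in \emph{Comp} are used.
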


\begin{proof}
We have proven that the array $T$ contains a structure of tree in theorem 1. Therefore there is not such a circuit in array $T$ that the source only has a \emph{unique} path to shoot to every target. On the other hand, we do not allow the empty node exists as leaf to every root except source, then there is not such a case that an empty node points to root with an empty cost to abscise the root and successors apart from network to become a separate \emph{cutting-graph}.\\

Against the structure of array $T$, once we assume that there is a path to go apart from the network, we immediately have two cases over the head of path: one is that the head might be inserted into a node that has been separated from network; another is that the head connects to some node in the path oneself. The first case is clearly contradiction to our precondition that the partition cover all nodes. The second can be justified not to appear by theorem 1 about the proof which there is not any circuit in array $T$. 

We have enumerated all cases that can conduce to a path or a node separate out of instance, and finally none can be set up. The lemma holds up.

\end{proof}
~\newline
Sum up our fore work, we found out the conditions to make the program halt, to keep the tree shape through the whole process. We thus can plot to proof of that when the program halts normally because no node need to be mended, the cost on every node is optimal. 

\begin{theorem}
Given conditions are same as in theorem 1. When the virtual program halts, for a cost on a target, the value is optimal. 
\end{theorem}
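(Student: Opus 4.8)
The plan is to recognize that the halting condition of EOM --- a full pass of the outer loop in which $\mathrm{flag}$ stays $0$ --- says exactly that $\omega$ is a fixed point of every root--leaf relaxation, and then to run the classical two-sided estimate identifying $\omega[t]$ with $\min_{p}W(p)$, the least total weight over all $s$--$t$ paths. First I would read off the body of \emph{Comp}: a clean pass means \emph{Comp}$(u,v)$ returns $0$ for every root $u\in R$ and every leaf $v\in L(u)$, which forces $\omega[u]\le\omega[v]+W_{(v,u)}$ simultaneously for all $u$ and all $v\in L(u)$; no improvement can be hiding, since a violated inequality would have been caught on that very pass, whatever the scan order of $R$. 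Because costs at non-source nodes only decrease during execution and each write to $\omega[u]$ inside \emph{Comp} is paired with $T[u]\leftarrow v$, the same argument gives the consistency relation $\omega[u]=\omega[T[u]]+W_{(T[u],u)}$ for every non-source $u$, while $\omega$ at the source stays $0$.

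Next, using Theorem~1 (no circuit in $T$) and Lemma~1 (the $T$-path from any target reaches the source in finitely many steps), I would unroll the consistency relation along that path and thereby exhibit $\omega[t]$ as the total weight $W(q_{t})$ of an actual $s$--$t$ path $q_{t}$; hence $\omega[t]\ge\min_{p}W(p)$. For the reverse inequality I would fix an arbitrary $s$--$t$ path $p=(x_{0}=s,x_{1},\dots,x_{k}=t)$; each arc $(x_{i-1},x_{i})$ is an arc of the network, so $x_{i-1}$ appears as a leaf of the root $x_{i}$ (for a simple graph through the coexistence of both sub-systems in $\tau$, for a directed graph through the companion structure $\beta$), and the fixed-point inequality gives $\omega[x_{i}]\le\omega[x_{i-1}]+W_{(x_{i-1},x_{i})}$. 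Telescoping from $\omega[x_{0}]=\omega[s]=0$ yields $\omega[t]\le\sum_{i=1}^{k}W_{(x_{i-1},x_{i})}=W(p)$, and since excising a cycle from a walk only lowers its weight (all weights being nonnegative) it suffices to let $p$ range over simple paths, so $\omega[t]\le\min_{p}W(p)$. Combining, $\omega[t]=\min_{p}W(p)$, the optimum --- and, in contrast to the guarantee for HDA, this is optimality over \emph{all} $s$--$t$ paths, not merely the fewest-hop ones, precisely because at the fixed point the relaxation inequality is certified against every leaf rather than only the upstream ones.

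The main obstacle is the first paragraph: one has to justify rigorously that a single change-free pass of the fixed-order outer loop certifies the \emph{global} system of inequalities and the consistency relation, even though relaxing one node could in principle re-open a node already visited earlier in the same pass. The resolution is that ``no change on an entire pass'' is self-certifying --- had any inequality failed, the pass would not have been clean --- but this must be phrased carefully, together with the bookkeeping that the source's cost is never overwritten, so that the base case $\omega[s]=0$ of the telescoping is legitimate. Once that invariant is secured, the remainder is the routine relaxation computation, Theorem~1 and Lemma~1 supplying acyclicity and reachability of the $T$-tree.
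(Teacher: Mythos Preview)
Your proposal is correct and more complete than the paper's own argument, but it proceeds by a genuinely different route. The paper argues locally and by contradiction: it fixes a target $c$, its $T$-predecessor $a$, and a hypothetical bypass through a common neighbour $b$ (with $b$'s own $T$-predecessor $d$), writes out the cost relations on this triangle, and shows that a strictly better two-arc bypass $(a,b,c)$ would either violate the optimality already recorded at $b$ by $\phi(X_b)$ or else contradict the assumption that the machine has halted; it then asserts that longer bypasses reduce to the same picture because one can always chase back to ``some node could not throughout be optimized''. You instead extract the full system of Bellman inequalities $\omega[u]\le\omega[v]+W_{(v,u)}$ from the clean pass, pair it with the consistency identity $\omega[u]=\omega[T[u]]+W_{(T[u],u)}$, and close the gap by a direct two-sided estimate: the $T$-tree (via Theorem~1 and Lemma~1) furnishes a witness path giving $\omega[t]\ge\min_p W(p)$, and telescoping the inequalities along an arbitrary $s$--$t$ path gives $\omega[t]\le W(p)$. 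This buys a uniform treatment of alternative paths of any length in one stroke, where the paper handles length two explicitly and only gestures at the rest, and it makes explicit the invariant $\omega[s]=0$ that the paper's pseudocode leaves unaddressed. The paper's triangle picture, for its part, stays closer to the $\chi$, $\phi$, $X_r$ apparatus already set up and does not need to name the global fixed-point system.
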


\begin{proof}
Given an instance and a virtual machine with involving program of EOM, we may suppose that the program have processed the entry out of HDA completely and halts, and output the arrays $T$ and $\omega$.\\

We primarily may consider a basis relationship among nodes, which is triangle. Consider nodes $a$, $b$ and $c$ are neighbors to each other such that there is a path passes through node $a$ to $c$; and  set the node $d$ as the superior of $b$ at another path through node $b$ to reach node $c$ too, which diagram lays on right side at \textbf{Figure.1}. We use the solid line to sign the final results, and the dash line shows a \emph{suspensefully} optimum bypass. By the definition of above \emph{triple} array\eqref{4}, we may formulize the cost on node $b$ as follows.
\begin{equation}\label{7}\chi(b,d)=\phi({X_{b}})\Rightarrow W_{b}=W_{d}\cdot\omega_{(d,b)}~\text{for}~X_{b}=\{\chi(b,i)\}_{i=a,d,\ldots}. \end{equation}
We can further assume arcs $(a,b)$ and $(b,c)$ organize a \emph{bypass} better than arc $(a,c)$, such that it causes a group of terms as follows.
\begin{equation}\label{8}
W_{c}=W_{b}\cdot\omega_{(b,c)}\Longrightarrow (W_{a}\cdot\omega_{(a,b)})\cdot\omega_{(b,c)} \prec W_{a}\cdot\omega_{(a,c)}\\
\end{equation}
Or
\begin{equation}\label{9}
\Longrightarrow (W_{d}\cdot\omega_{(d,b)})\cdot\omega_{(b,c)} \prec W_{a}\cdot\omega_{(a,c)}\\
\end{equation}
It is clearly that if the term\eqref{8} holds up, then follows equation\eqref{7} such that $W_{d}\cdot\omega_{(d,b)}\prec W_{a}\cdot\omega_{(a,b)}$; the optimum bypass $p=(a,b,c)$ of hypothesis cannot hold up, the path $(a,c)$ would be instead of path $p=(d,b,c)$.  Analogously, if the term\eqref{9} holds up, it should contradict to precondition for machine has halted. 

Otherwise there is another case after machine halts, which for the path from $a$ to $c$ and with a certain bypass, both lengths are greater than 1. Any way, we can always go after to some node could not throughout be optimized, then the above proof are same for apply on this case. Therefore we prove this theorem. 

\end{proof}

\subsection{Discussion}
Our measure finally has been standing on a solid base of theory through serial proofs. These theories protect our robust performance may automatically run in a perfect logical framework throughout the whole period of computing, so as to we can save plentiful resources. Because of the survey, we generalize these above contents into couples of terms as follows.

\begin{flushleft}
\begin{enumerate} 
\item Our HDA program runs on an \emph{abstracted connected} level without concerning over either single or multiple arcs among a pair of nodes; graph partition would tackle two cases just as one connected relation between them and present a strong sense at direction of \emph{flow}. This treatment on graph makes the \emph{isomorphism} of graph come to a concrete shape, and go abreast with all shortest paths to other nodes from source without manual works. Meanwhile this method may filter all wrong arcs into the trivial stuff completely; finally it makes us be able to evaluate any type of graph. Accordingly for a multiple graph which is allowed to couples of arcs among a pair of nodes at a same strike with several weights, we may similarly cope with this type network like singular\textquoteright s doing, save that we justly add else\textquoteright s assessing system over that \emph{Multiset}. We thus gain a perfect capacity to treat each given complicated network by a series of concise data structures.

\item We exploit data structure and computing features to keep a structure of tree through the whole process, which is in history, Jin Y. Yen call this type of measure as \emph{no loops are allowed}\cite{6} ever were proposed by \emph{Bock, Kantner, and Haynes, Pollack, Clarke, Krikorian, and Rausan, Sakarovitch and others}\cite{6}. Only as a basis-logical-framework, this is greatly simple to prevent the circuit appearing, such that we can use this method onto the other analogous problems, which needs to keep a tree structure to go through whole process.

\item By the theorem 1, we can generalize the context for conclusions: the \emph{extreme} evaluation (\emph{max} or \emph{min}) amid those shortest paths might be produced out of HDA; and the \emph{general} extreme evaluation might be out of EOM, which cost can be \emph{irrelative} over the length of path\textquoteright s oneself. In fact, if follows the conditions to process a path, we can not only concern the extreme evaluation, but for further to evaluate a problem by an even general calculation system. So we may obtain a more ordinary extension on methodology of decision. Thus we can no longer say \emph{shortest paths problem} and alter the name into \emph{optimal} \emph{paths problem} with much widespread significance.

\item When we look through the above pseudo codes, the space complexity of EOM measure ought to refer into $O(n)$ definitely, which all the associated arrays are nth size equal to the number of nodes. If each node might be there referred to with a series of costs to present a optimizing progress though, we accordingly can suppose in the worst case is there only an optimizing step in a loop for pass through the whole data; we then may learn for a \emph{weakly} connected network such that the runtime complexity ought to be $O(nE)$ for $E=mn$. For one to \emph{strongly} connected, we have got $O(n^{3})=O(E^{1.5})$, which might be frankly to speak that the efficiency is clearly low to the implement of EOM measure. 

\item On another side, it gives an idea to optimize EOM measure that by theorem 2 the final optimal result has not business with optimizing process. However, the work could be achieved over various startups and ends in array $R$, only save to screening a \emph{sufficient} quantity of nodes. This character gives a wide room to arrange more tactics for optimizing. 

\item In addition, we have a risk to define HZP, though it is commonly to pave some clues into network for test algorithm; but it what to allow zero cost in algebraic measure on matrix may be inductively to error of calculation result\cite{4} because 0 is often to present the \emph{eigen-identity} (such as multiplicative identity) for calculated operation. Fortunately we sought out the mechanism about computing and data structure to support our whole methodology. 
\end{enumerate} 
\end{flushleft}

\begin{center}
\includegraphics[width=120mm]{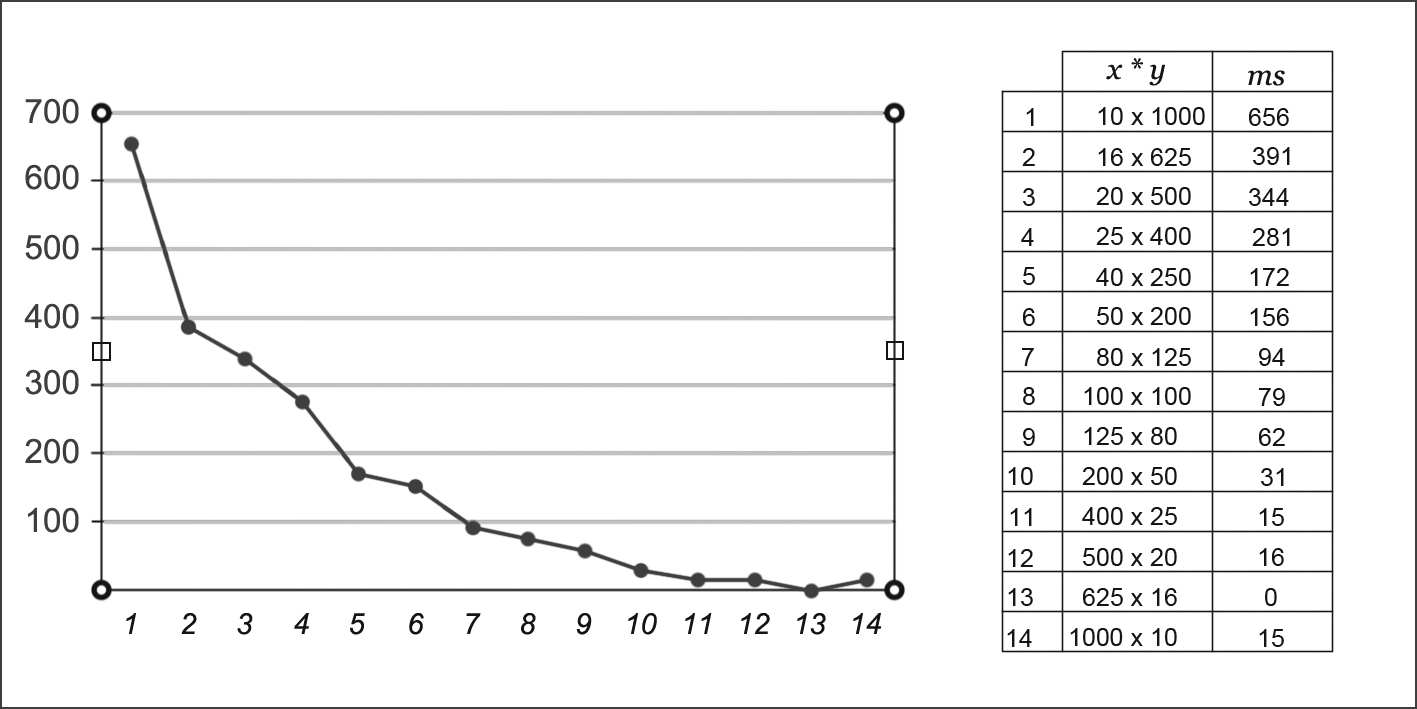}\\
\textbf{Figure.2}\footnote{All programs were encoded by $C++$ console plateform and running on a laptop with i3-3217U 4 cores, 4G memory and Windows 10 OS}
\end{center}
\emph{The curves line shows the practical runtime of execution with an input entry of grid of $10k$ nodes and round $38k$ arcs, which time unit is of \emph{millisecond}. We settle a HZP by a firm measure on it and let the values of $k_{c}$ and $k_{r}$ to change in range $[10,1k]$ and throughout keep the sum of nodes at $10k$. It is clear that the shape of network influence the hours.}\\
~\newline
Finally, it is clearly for the idea of EOM only to be adequate in some case to accelerate computing, which attempts to add more batches for process by screening from \emph{two} courses. Anyway, this measure cannot cut down the overall complexity of runtime.

\section{Monarchy Algorithm}
When we talk about optimizing EOM method, we ought to note that the character of a \emph{heuristic} problem, which is inevitable in want to reading the whole data at least for once. According to the above $5^{th}$ item of summary in front section, as long as there is a \emph{perfect} scheme for search, it ought to need just \emph{once} ergodic traversal on data throughout to accomplish the search task. Therefore this is the theoretical \emph{lowest} boundary for EOM method; so that our today task is of narrow the search of transaction and best to let the number of ergodic loop approach to a \emph{constant} which is far less than $n$. If the aim reaches, the runtime complexity can forward to below the \emph{square} magnitude of $n$. 

\subsection{Characters Survey} 
A solution we think about is to narrow the domain of screen into a small scope so as to pull down the effort. By the above contents, we can feel the case that there useless operations abound in an executed period; without saying, filtering out the useless nodes is equal to reduce the figures of useless transaction, accordingly we go start at survey the features around the data out of HDA, we hope to be able to qualify those nodes by their features and part out the \emph{useless} and \emph{useful} among them. So at first, we may objectively list several terms as follows.

\begin{flushleft}
\begin{enumerate} 
\item\emph{The current conduction to update the cost in OEM method is \small{\textbf{\emph{passive}}} which for a node, the strike of optimize itself is out of its neighbors. If we suppose there is a tree and its root is the essential \small{\textbf{\emph{origin}}} to issue its sound influence, then this passive conduct may hunt out the root along any string on the tree successively. Whenever we grasp the structure of the tree, the influence can \small{\textbf{\emph{diffuse}}} out of the root and rapidly flow over the trunk, the branches and till the leaves, so we may unconcern other irrelative nodes and rise up the efficiency of transactions. The effort thus can be done and set in a \small{\textbf{\emph{local area}}}. }

\item\emph{Along the above string in first term, for the diffusion, we have to modify the passive type of transaction to \small{\textbf{\emph{active}}}. We allow the root to actively seek the neighbors and forward the improvement to them, or rather, we merely parts the whole period of a former operation passive on the strike of \emph{m} leaves setting to a root into \emph{m} single steps, and arrange them into mth different stages of process. It does not change anything but the program may actively hunt the heading of progress.}

\item\emph{We can sort those nodes into four kinds according to the \small{\textbf{\emph{capacity}}} of optimizing. We firstly define both ends at a tree: the \small{\textbf{\emph{origin}}} is only of mending others; the property of \small{\textbf{\emph{terminal}}} just inverses to origin. We say the node is in \small{\textbf{\emph{dormancy}}} because it cannot do anything on neighbors and nothing be done on itself by others. Contrast to a vertex holds the situation in parallel with doing and being done, we call it \small{\textbf{\emph{turning}}} in activeness. }

\item\emph{The above identities can be changed underlying some case. Consider an origin fulfills the service for its neighbors who meet condition of optimizing; if the work has been finished, it should get into dormancy, at least in hang, unless there is a chance to refine its cost into a \textquotedblleft better\textquotedblright value. Under this new policy, the turning can be changed to an origin after optimized by origin; likewise the dormant node may be activated into a terminal, to a turning even to an origin step-by-step.}

\item\emph{There must be the supreme origin and suchlike on instance, towards which none can make an influence. Moreover, if there is not such case, the job of optimizing should be supposed to accomplish. We also suppose to sort kinds of \textbf{\small{\emph{false}}} and \textbf{\small{\emph{real}}} to those origins for our task, which means the real can cover the false in the optimizing contest with its influence; we call the operation \small{\textbf{\emph{cover}}} too.}

\item\emph{Following the new competed law between origins, these real origins can diffuse their influences in the room of $V$. Whenever various muscles from those origins likely have a strike on a hot zone, finally, the optimal one ought to win the campaign. This tactic give us to exploit the game to find a way to push ahead the real origin to cover its plot, so likely to cover those false origins to radically sweep out the factors of low efficiency over the recurrent update of leaf and cost on arrays $T$ and $\omega$. Our job is of find a way to speed up this process.}
\end{enumerate}
\end{flushleft}

\subsection{Algorithm Design} 
Based on the discussions above, we firstly give the active measure to update the preceded function \emph{Comp} amid those pseudo codes in \emph{table.1} as follows.
\begin{align*}
\text{\textbf{function}}\text{\emph{ Co}}&\text{\emph{mp}}(root,leaf)\\
&\text{w}=\omega[root]\cdot W[root][leaf];\\
&\text{\small{\textbf{If}}~}T[leaf] =0 \text{~\small{\textbf{or}}~} (T[leaf] >0 \text{~\small{\textbf{and}}~} \text{w}\prec\omega[leaf])\\
&\text{\small{\textbf{Than}}}~~T[leaf] =root;~ \omega[leaf]=\text{w}; \text{\small{\textbf{~return}}~} 1;
\end{align*}
~\newline
Subsequently, we adopt a simple measure to seek the origins throughout in array $R$. Following the second term above, we lay down a criterion for screen work: if a node can correct its neighbor, we will sort the neighbor into the kind of dormancy. On the contrast, if a root cannot do anything on its neighbors, it should be set into dormancy. Hence, we may define the operator $\psi$ as follows. The status about all nodes will be settled in an nth hash table \emph{status}, where we use the operator $\psi$ to map the nodes onto the binary set $\{0,1\}$ as $\psi\colon V\rightarrow\{0,1\}$.

\begin{equation*}
\begin{split}
&\psi(status, u, v)\coloneq\left \lbrace
\begin{array}{rl}
status[v]=0, &\text{iff}~\exists v\in L(u)\Longrightarrow \omega_{(u,v)}\cdot W_{u}\prec W_{v}.\\
&~\\
status[u]=0, &\text{iff}~\forall v\in L(u)\Longrightarrow \omega_{(u,v)}\cdot W_{u}\succ W_{v}.
\end{array}\right.\\
\end{split}
\end{equation*}
\newline
In above function, we only define how to seek out the dormant nodes. Before executing the operator, we can initialize all nodes as \emph{origin} which all figures in array \emph{status} are set into 1; such that the \emph{survivor} out of screen should be the \emph{origin} we want, like as the shells leave on the sand after the tide going out, yet we still cannot have got any evidence to charge them between the real or false unless running the heuristic computing. We thus do a preparative work before our contest. And to yet, we have scanned the whole data in \emph{twice}, which the former is HDA.

Since the consequent contest could make those origins recover mutually, we briefly call the method \emph{Monarchy Algorithm}. We hereby span cross some contents which is trivial to our survey, thus we issue three representative solutions as follows. 

\begin{flushleft}
\begin{enumerate} 
\item \emph{We set up a \emph{major} loop for consecutively access the array R. Here is a small trick that when the origin optimizes those nodes, we follow the clue of heading that is towards \emph{left} side or \emph{right} side for next cover.}
\item \emph{Although the three measures work on the different details with their several ways, we also call them \emph{localization} optimizing method. We sort them into three patterns: \emph{High Rank Priority}; \emph{Free Roaming}; \emph{Hunting and Tracing}. Correspondingly, they are abbr. into \emph{HRP}, \emph{FR} and \emph{H\&T}.}
\end{enumerate}
\end{flushleft}

We issue the pseudo code of HRP method in \textbf{Table 3}. We add a new nth array $C$ which data structure inverses the relation of index onto stuff in array $R$, which function is for us to rapid inquiry the position of node in array $R$ with node as index of array $C$. It can be produced along with array $R$ at the stage of graph partition. It is readily that by inquiry array $C$, we can compare the positions in hierarchy between root and its leaf. 

\renewcommand\arraystretch{1.0}
\begin{longtable}{p{120mm}}
\caption{\small{\textbf{HRP Solution.}}}\\
\toprule
\small{\textbf{Input:}}$~G=(V,\tau,W); R,T,\omega, status;$\\
$C=(x_{i})_{i}^{n}:R[i]=x\Rightarrow C[x]=i; \text{flag = 0}$; \\
\small{\textbf{Output:}} $T,\omega$.\\	
\midrule
01.\small{\textbf{While}} $\text{counter}=1$ \\
02.\quad\small{\textbf{For}} $i\coloneq1\rightarrow n\Rightarrow u=R[i]; \text{if }status[u]=1$ \small{\textbf{Do}}\\
03.\quad\quad\small{\textbf{For}} each $v\in L(u)$ \small{\textbf{Do}}\\
04.\quad\quad\quad\small{\textbf{If}} \emph{Comp}$(u,v)$ \small{\textbf{Than}} $\text{flag}++; status[v]=1;$\\
05.\quad\quad\quad\quad\small{\textbf{If}} $i>C[v]$ \small{\textbf{Than}} $ i=C[v]$;\\
06.\quad\quad $status[u]=0$;\\
07.\quad\small{\textbf{If}} $\text{flag}=0$ \small{\textbf{Than Break}}; \small{\textbf{Else}} $\text{flag}=0$;\\
\bottomrule
\end{longtable}

~\newline
The technique is simple: except maneuver the past-mentioned measures, at the $5^{th}$ step, we allow the pointer of operation may go \emph{aback} to the left side of current origin in array $R$ for next step, when operator meets the leaf from high rank; we call this measure \textquotedblleft \emph{higher first}\textquotedblright. For this explainable conduct, we have to refer to the structure of array $R$. By proof in theorem 1, there is an infrastructure of logical tree in the array $T$ throughout the whole computing period. Accordingly, while a node has been optimized at the relevant high rank, this case should conduce to a chain reaction happen on the subsequence queue of nodes to update several costs, which path shoots towards remote destinations at the low rank. Therefore we let \emph{HRP} method be able to consecutively return to fore position to enhance the capacity of covering from left to right. The distinct quality for HRP to FR and H\&T methods is the behind fellows would be able to \emph{trace} the clue of operation, and go after these optimizing strings.

The common quality over FR and H\&T methods is they both follow another principle of higher first: they can trace the leaf which position in hierarchy system is highest in qualified leaves to root. The difference is of detail to how to deal with the case which the operator meets \emph{terminal}. The fore method FR would go on search from left to right in array $R$ with a \emph{free} action; if it encounters the end of array $R$, then the pointer of operator will always go to the head of array $R$ and begins a new round search. But for H\&T method, it may record the \emph{initial} position of origin, and turn back at once to initial orbit in array $R$ while meets terminal. Analogously, it is same as FR when the pointer of operation meets the end of array $R$.

\subsection{Experiments on Mega Level}
Here we use a table to exhibit their performances in the trial on a \emph{grid} network with a $1M$ nodes and near fourfold into one million for arcs, which is round $3.8M$. We let the shape into two types, including \emph{even rectangle} and \emph{square}, which is a different proportion of node\textquoteright s number on the both sides of \emph{row} into \emph{column}. We preparedly use the stochastic assignment for weights on those arcs with natural numbers in the range $[1,10]$. The source selection makes a node at an arbitrary corner. 

\renewcommand\arraystretch{1.0}
\begin{longtable}{|p{12mm}|p{10mm}|p{10mm}|p{10mm}||p{3mm}|p{10mm}|p{12mm}|p{12mm}|}
\caption{\small{\textbf{HRP, FR and H\&T Trial}}}\\
\toprule
\quad\small{\textbf{1.}}&\small{\textbf{HRP}}&\small{\textbf{FR}}&\small{\textbf{H\&T}}&\small{\textbf{2.}}&\small{\textbf{HRP}}&\small{\textbf{FR}}&\small{\textbf{H\&T}}\\
\midrule
\small{\textbf{R.T}}&\small{$\succcurlyeq 17s$}&\small{$\succcurlyeq 3m$} &\small{$\succcurlyeq 1s$}&~&\small{$937ms$}&\small{$\succcurlyeq 20s$} &\small{$625ms$}\\
\midrule
\small{\textbf{BL}}&\small{179}&\small{7\,845}&\small{3}&~&\small{41}&\small{1\,800}&\small{4}\\
\midrule
\small{\textbf{SNOA}}&\small{57.11}&\small{329.67}&\small{3.71}&~&\small{10.26}&\small{344.09}&\small{5.35}\\
\midrule
\small{\textbf{POOA}}&\small{54.74}&\small{252.04}&\small{2.74}&~&\small{0.20}&\small{0.50}&\small{0.39}\\
\midrule
\small{\textbf{ONOA}}&\small{69.31}&\small{254.04}&\small{3.23}&~&\small{0.19}&\small{0.63}&\small{0.59}\\
\bottomrule
\end{longtable}
~\newline
 \small{\textbf{Note:} the \textbf{No.1} option on the left side refers to the \emph{even rectangle} with $10\times100k$, and the HDA took $125$ \emph{milliseconds} in practical execution. The \text{No.2} trial on the right side is a \emph{square} on $1k\times 1k$ with $359ms$ execution runtime. The number of arcs on the second network exceeds the fore one into round 10\%, though the both quantities of nodes are equal. }	\\	
~\newline
Those options\textquoteright {~}interpretations are listed as follows.
\begin{flushleft}
\begin{enumerate} 
\item \textbf{R.T} \emph{is the practical runtime for every solution on every method, which they \emph{share} the same data out of HDA.}
\item \textbf{BL} \emph{means the number of major loop which contains an entire \emph{cyclic} of scanning array R for once or reiteration by left-to-right.}
\item \textbf{SNOA} \emph{is of the sum of scanning nodes over the quantity of arcs.}
\item \textbf{OOA} \emph{is of figure for the origins over arcs.}
\item \textbf{ONOA} \emph{is about optimized nodes over arcs.}
\end{enumerate}
\end{flushleft}
~\newline
It is readily to issue the themes about the figures and their logical relations on the table. Frankly speaking, this trial is \emph{epoch-making} for a network with arbitrary assignment and up to the \emph{mega} level, however both to two variables of \emph{n} and \emph{E}. For the process data hosted in a common $laptop$, the runtime yet can be fall down into the range of $millisecond$. But there is a problem serious to assess the efficiency of execution that it seems in appearance for us to identify which measure is advance on efficiency by those trials. In fact, we still have no theory to follow or patterns to provide a group of objective instance for test without any \emph{prejudice} or \emph{bias}. Hence, this trial may not be everything into the truth of practice; strictly speaking, the table only likely presents a large probability event. 

Any way, we must frankly agree with the fact, the H\&T solution indeed rendered an excellent performance with a considerable smooth. It tremendously narrows the scope of computing. More critical importance, we can observe the phenomenon about the effort to be reduced to $single$ digit into scale of arcs, which is at the SNOA option to express the case for far less than the $n$. Follows the data in trial, we have found out a promising \emph{linear algorithm} to solve the optimal paths problem, though we at present lack the further or precise evidences on morphology to demonstrate.

When we paved the HZP into network and made the trials again, the HRP definitely was the loser in this game with more than 8 seconds to seek out a HZP in above networks. We merely test the FR and H\&T method, which results are listed in the following \textbf{\small{Table.5}}.
\renewcommand\arraystretch{1.0}
\begin{longtable}{|p{12mm}|p{16mm}|p{16mm}||p{10mm}|p{16mm}|p{16mm}|}
\caption{\small{\textbf{FR and H\&T Trial(HZP)}}}\\
\toprule
\quad\textbf{1.}&\textbf{FR}&\textbf{H\&T}&~\textbf{2.}&\textbf{FR}&\textbf{H\&T}\\
\midrule 
\small{\textbf{R.T}}&\small{$578ms$} &\small{$781ms$}&~&\small{$266ms$} &\small{$297ms$}\\
\midrule
\small{\textbf{BL}}&\small{2}&\small{2}&~&\small{2}&\small{2}\\
\midrule 					
\small{\textbf{SNOA}}&\small{0.98}&\small{2.63}&~&\small{1.04}&\small{1.04}\\
\midrule					
\small{\textbf{OOA}}&\small{0.52}&\small{0.78}&~&\small{0.32}&\small{0.32}\\
\midrule 					
\small{\textbf{ONOA}}&\small{0.99}&\small{1.55}&~&\small{0.42}&\small{0.42}\\
\bottomrule
\end{longtable}
~\newline
The two trials may state the H\&T method is success based on our device: the real origins ought to always congest within the high rank relatively. So we design the measure to exploit the shoot towards both sides and the dominance over program that the scanning process is from high to low, such that the node at high rank generally may be processed ahead than lows\textquoteright. Between those measures may enforce the hunting at high rank and encourage the real origin in high rank to cover lower area, so as to form an effective tactic to sharply cut down the expenditure over hours. The figures of ratio on above \emph{tables} show the merit of H\&T method.

\subsection{H\&T Complexity}
The space complexity on new method is not changed than former method EOM as $O(n)$, although we add some nth arrays into program. At the theoretical level, we can suppose there are many origins on network but less than $n$, such that the HZP is represented a specially extreme case for only one real origin and an optimizing tree with a \emph{longest} journey in network; and the HDA exhibits another situation that the optimal path is justly mere in the collection of \emph{shortest} paths without any real origin. There are much more complicated cases and causes between the former two boundaries which we might suppose a range $\Delta$ to include all probabilities with two borders of the shortest and longest length respectively in HDA and HZP. Contrast to a middle case in $\Delta$, our monarchy algorithm allows the diffusion out of an origin like a process to \emph{shadow} a tree from root. Between the trees and network may be pieced by us into a picture to a vivid shape that should be outlined as same as many cracks on a cement slab.

We know the collection of $V$ throughout is the \emph{space} to those diffusions along on those strings constructed by arcs, as long as there is a perfect \emph{schedule} to rule process data, the program then should finish the optimizing work in one period of screen the whole data by that \emph{navigation} of chart. Therefore we now understand: despite the shape is which, the theoretical lowest boundary is 3 times for screen the whole data for a middle case in $\Delta$; concretely we can part it into 3 stages: \emph{first is HDA for initialization with the optimal path amid those shortest paths; the second is to identify the origin or dormancy; the last is final to turn out fruits}. But since we still use the common \emph{sheet} that is the array $R$, through these trials, we may learn there are more recurrent operations of covering in our solution. Hereby we only talk about H\&T measure at the aspect of runtime complexity. By a sound opinion about runtime complexity, we may lay down our conclusion with the term $O(\lambda E)$ for $E=mn; 1\leq\lambda\ll\sqrt{n}$. If the network is \emph{strongly} connected, then there is $E=n^2$ and runtime complexity is $O(\lambda n^2)$, further having $O(n^{2.5})$ or $O(E^{1.25})$ at the worst case.

\section{Summary}
We eventually finish this survey about optimal paths problem. To generalize all contents above, we obtain a series of characters about our logical and calculative system including generic on all types of graph; furthermore we use a simple measure to maintain a basic framework and a rapid search which might work with mega level of input data \emph{unprecedented} to such scale. These new fruits may better support our practical works. For example to the generic feature, it is of \emph{data structure oriented} that says either adding or cutting nodes, it can let us not to concern the concrete shape of network; so that it means the communication may has flexible capacity to deliberately cope with incident of local disaster at network, and does same on quantity of nodes increasing quickly. 

~\newline
\emph{Applications.} Firstly, the linear runtime complexity means the small facility can bear the effort of search, so that we can smartly deploy the tasks among servers and terminals under the model of distributed computing. At some notion, much more effort into the side of terminal can lead computing burden on servers to warant the efford on them cannot rocket up according to the amount of terminals in a quick increasing. 

Second is importance for the field of \emph{traffic management} as our solution becomes a route planning on self-driving machine. For an instance, since the past recent, a new traffic problem has been supposed up that how to make an \emph{itinerary} to properly exploit the public system including \emph{marine, aviate, ground} and \emph{tube}, so as to reach a destination with less \emph{money}, \emph{hours} or more \emph{spots} and etc. 

To our method, this problem may belong to a \emph{multiple} weights case. For a public transportation system, we can suppose it is a weighted system $S=\{s_{i}\}_{i=1}^{k}$. The every sub-system individually represents a kind of vehicle, such as bus or railway. For a physical connected pair $u$ and $v$, there may be at least a kind of vehicle cover them which we can describe the traffic system $S(u,v)$ as the subset to $S$. It is apparent that if we have a unified criterion to assess such as hour or money, then we may immediately issue the \emph{itinerary} for client. The new data structure contrast to our former one is just to remove single weight on an arc instead of adding a weighted table; and some assessing method will be changed correspondingly to weighted set. But when multiple criteria organize the estimated system, the case should become so complicated that we will speak it well later. 

On the other hand, if we regulate the method on every autonomous machine, we may trace every machine by their several conducts. More importance, we can utilize this method to organize multiple actions in \emph{parallel processes}. For example, a native express company grasps several post stations in a city, which we can view these stations as multiple entrances for import items. The manager hopes to plot a scheme about these items to be dispatched to clients tomorrow by robots. When the data of post packages arrive at his computer from other outer stations, the problem is involved to multiple sources and they need do work in parallel hours.

The question has been mentioned in the fore section. To all contents above, the methodology seemingly gives an impression to only work with \emph{single source}. In fact, when the head region contains \emph{multiple sources}, this method likewise shall work in smooth, but we need add an array to record the tags of inner stations for every destination and turning. Through \emph{covering} to each other, we shall have an optimal solution, and the every target or turning might flag a unique tag from different post stations. We can call the method as \emph{kingdom solution}. And accordingly, the manager will have a scheme for tomorrow, and follow the plan to send involving sheet to per outer station to arrange the deal for dispatch items to every native station. This methodology will form a precise and intelligent governed system.

Therefore we can further and constructively expand the scope to cover the communication, indoor walk, data service and otherwise; indeed, they are about the transportation and just the cargos differ any way.

~\newline
\emph{Future Works.} Beside to seek better measure with steady runtime complexity, the biggest fruit out of the theorem 1 is that we have had an essential principle to regulate our expansion into the range of this class problem, especially to the system described by multiple scores. We may device to build up an estimate system to contain many more factors in favor of optimal decision as the above example of itinerary.

Consider another practical case that there is a cargo for shipment to a client; there are two variables of \emph{cartage} and \emph{hour}, which must be in a range respectively. And we know the two variables mutually follow a score of inverse relation that the shorting hour is over more paying or on contrary; and maybe the ratios differ in different states. This case reflects computing is no longer to evaluate the extreme value. Our research work needs to rid off the narrow circumstance of either maximum or minimum. Sometime the optimal decision is a perfect match that is always influenced by economy, religion, tradition, culture, policy, state law, trade and etc.


\begin{thebibliography}{99}
\bibitem{1}Tan, Yong. (2013). "Construct Graph Logic" CoRR abs/1312.2209.
\bibitem{2}Dijkstra, E. W. (1959). "A note on two problems in connexion with graphs" (PDF). Numerische Mathematik 1: 269–271. doi:10.1007/BF01386390.
\bibitem{3}Virginia, Vassilevska. (2008) "Efficient Algorithms for Path Problemsin Weighted Graphs". Carnegie Mellon University Pittsburgh, PA, USA ©2008. ISDN 978-0-549-75885-3.
\bibitem{4}Tan, Yong. (2016). "Find an Optimal Path in Static System and Dynamical System within Polynomial Runtime'' CoRR abs/1602.02377.
\bibitem{5}Yen, J. Y (1971). "Finding the K-Shortest Loopless Paths in a Network", INFORMS: Management Science, Vol. 17, No. 11, Theory Series (Jul., 1971), pp. 712-716.
\bibitem{6}David Eppstein.(1997) ''Finding the k Shortest Paths'', SIAM Journal on Computing, 1998, Vol. 28, No. 2 : pp. 652-673.
David Eppstein
\end{thebibliography}
\end{document}